\documentclass[12pt]{article}
\RequirePackage[colorlinks,citecolor=blue,urlcolor=blue,linkcolor=blue]{hyperref}
\usepackage{url}

\usepackage[ansinew]{inputenc}

\usepackage{doi}

\usepackage{amsmath}


\usepackage{mathtools}

\usepackage{amsfonts}
\usepackage{amsthm}
\usepackage{amssymb}

\usepackage{color}
\usepackage{bm} 
\usepackage{dsfont} 

\usepackage{graphicx}
\usepackage[lmargin=0.5in,rmargin=0.5in,bottom=1in,top=1in]{geometry}

\usepackage{tikz}
\definecolor{mycolor1}{rgb}{0.105882,0.619608,0.466667}
\definecolor{mycolor2}{rgb}{0.85098,0.372549,0.00784314}
\definecolor{mycolor3}{rgb}{0.458824,0.439216,0.701961}
\definecolor{mycolor4}{rgb}{0.905882,0.160784,0.541176}
\definecolor{mycolor5}{rgb}{0.4,0.65098,0.117647}
\definecolor{mycolor6}{rgb}{0.65098,0.462745,0.113725}
\definecolor{mycolor7}{rgb}{0.901961,0.670588,0.00784314}
\definecolor{mycolor8}{rgb}{0.4,0.4,0.4}
\definecolor{mycolor9}{rgb}{0.301961,0,0.294118}
\definecolor{mycolor10}{rgb}{0.0313725,0.25098,0.505882}

\usetikzlibrary{calc}        

\makeatletter
\newif\ifmygrid@coordinates
\tikzset{/mygrid/step line/.style={line width=0.80pt,draw=gray!80},
         /mygrid/steplet line/.style={line width=0.25pt,draw=gray!80}}
\pgfkeys{/mygrid/.cd,
         step/.store in=\mygrid@step,
         steplet/.store in=\mygrid@steplet,
         coordinates/.is if=mygrid@coordinates}
\def\mygrid@def@coordinates(#1,#2)(#3,#4){%
    \def\mygrid@xlo{#1}%
    \def\mygrid@xhi{#3}%
    \def\mygrid@ylo{#2}%
    \def\mygrid@yhi{#4}%
}
\newcommand\DrawGrid[3][]{%
    \pgfkeys{/mygrid/.cd,coordinates=true,step=1,steplet=0.2,#1}%
    \draw[/mygrid/steplet line] #2 grid[step=\mygrid@steplet] #3;
    \draw[/mygrid/step line] #2 grid[step=\mygrid@step] #3;
    \mygrid@def@coordinates#2#3%
    \ifmygrid@coordinates%
        \draw[/mygrid/step line]
        \foreach \xpos in {\mygrid@xlo,...,\mygrid@xhi} {%
          (\xpos,\mygrid@ylo) -- ++(0,-3pt)
                              node[anchor=north] {$\xpos$}
        }
        \foreach \ypos in {\mygrid@ylo,...,\mygrid@yhi} {%
          (\mygrid@xlo,\ypos) -- ++(-3pt,0)
                              node[anchor=east] {$\ypos$}
        };
    \fi%
}
\makeatother



\usepackage{psfrag}

\usepackage{algorithm}
\usepackage{algpseudocode}
\algdef{SE}[DOWHILE]{Do}{doWhile}{\algorithmicdo}[1]{\algorithmicwhile\ #1}%


\usepackage{mathrsfs} 

\usepackage[square,numbers,compress]{natbib}


\newcommand{\remove}[1]{}
\newcommand{\removesafe}[1]{}

\usepackage{subfigure}






\newcommand{\R}{\mathbb{R}}
\newcommand{\A}{\mathcal{A}}
\newcommand{\eps}{\epsilon}
\newcommand{\xtilde}{\tilde{x}}
\newcommand{\PP}{\mathbb{P}}






\newtheorem{theorem}{Theorem} 
\newtheorem{lemma}[theorem]{Lemma}




\title{An Elementary Proof of Convex Phase Retrieval in the Natural Parameter Space via the Linear Program PhaseMax}
\author{
Paul Hand\footnote{Department of Computational and Applied Mathematics, Rice University, TX.}  \ and Vladislav Voroninski\footnote{Helm.ai, CA}}

\begin{document}

\maketitle
\abstract{The phase retrieval problem has garnered significant attention since the development of the PhaseLift algorithm, which is a convex program that operates in a lifted space of matrices.  Because of the substantial computational cost due to lifting, many approaches to phase retrieval have been developed, including non-convex optimization algorithms which operate in the natural parameter space, such as Wirtinger Flow. Very recently, a convex formulation called PhaseMax has been discovered, and it has been proven to achieve phase retrieval via linear programming in the natural parameter space under optimal sample complexity.  The current proofs of PhaseMax rely on statistical learning theory or geometric probability theory.  Here, we present a short and elementary proof that PhaseMax exactly recovers real-valued vectors from random measurements under optimal sample complexity.  Our proof only relies on standard probabilistic concentration and covering arguments, yielding a simpler and more direct proof than those that require statistical learning theory, geometric probability or the highly technical arguments for Wirtinger Flow-like approaches.}

\section{Introduction}
Consider the classical phase retrieval problem: recover $x_0 \in \mathbb{C}^n$ modulo global phase from a collection of measurements $|\langle a_i, x_0 \rangle|, i=1, \ldots m$, where $a_i \in \mathbb{C}^n$ are known measurement vectors. 
We consider the idealized model $a_i \sim \mathcal{N}(0, I_{n \times n})$ for the measurement vectors. The first algorithm which guaranteed exact and noise-robust phase retrieval from $O(n)$ random phaseless measurements was PhaseLift \cite{CESV2011, CSV2013, CL2012}, which is a convex semidefinite program that operates in a lifted space of $n\times n$ matrices.  Because lifting squares the problem dimensionality, PhaseLift is prohibitively computationally expensive and thus much research has focused on alternative formulations. In particular, non-convex formulations include the Wirtinger Flow \cite{wirtinger}, Truncated Wirtinger Flow \cite{twf}, Truncated Amplitude Flow \cite{yoninaTAF}, and the gradient descent of \cite{SQW2016}.  These methods permit phase retrieval from Gaussian measurements under optimal or near optimal sample complexity in the natural parameter space, however they are technically challenging to analyze theoretically and lead to complicated algorithmic formulations with many parameters.  

To the surprise of the community, a recent successful formulation for phase retrieval called PhaseMax, independently developed in \cite{phasemax, phasemaxJustin}, is convex and operates in the natural $n$-dimensional parameter space.  The theoretical results of \cite{phasemax} achieve a tighter sample complexity than those in \cite{phasemaxJustin}, notably providing guarantees very close to the information theoretic lower-bounds.  Both of these approaches rely on first finding an anchor vector which is positively correlated with the vector $x_0$, provided for instance by using a spectral initialization first reported on by Netrapali et al. \cite{altminphase} and further enhanced by authors of Wirtinger Flow-like methods. The proof in \cite{phasemaxJustin} uses arguments based on statistical learning theory, and the the proof in \cite{phasemax} uses arguments from sphere covering and geometric probability.

In this short paper, we consider only the real-valued case for simplicity (the complex case is very similar) and present an alternate elementary proof that  PhaseMax succeeds in finding $x_0$ up to global sign from $O(n)$ phaseless Gaussian measurements, thus achieving phase retrieval under optimal sample complexity via a linear program with a linear number of constraints.  Our proof is based on standard elementary probabilistic concentration estimates of the singular values of random matrices.

\section{Main Result and Proof}
Our main result is that PhaseMax  succeeds at recovering a fixed signal $x_0\in \R^n$ with high probability from $O(n)$ random measurements, when provided with an anchor vector $\phi$ that is sufficiently close to $x_0$. 

\begin{theorem} \label{main-theorem}
Fix $x_0 \in \R^n$.  Let $a_i$ be i.i.d. $\mathcal{N}(0, I_{n \times n})$, for $i = 1\ldots m$.  Let $y_i = |\langle a_i, x_0\rangle |$.   
Let $\phi \in \mathbb{R}^n$ be such that $\|\phi-x_0\|_2 < 0.6 \|x_0\|_2$.  If $m \geq c n$, then with probability at least $1 - 6 e^{-\gamma m}$, $x_0$ is the unique solution of the linear program PhaseMax:
\begin{equation}\label{PhaseMax}
\begin{array}{ll}
\max & \langle \phi, x \rangle \\
\text{s.t.} & -y_i \leq \langle a_i, x \rangle \leq y_i, \quad  i = 1\ldots  m\\ 
\end{array}
\end{equation}
Here, $\gamma$ and $c$ are universal constants.
\end{theorem}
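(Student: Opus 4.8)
The plan is to show that $x_0$ is the unique maximizer by proving that every nonzero feasible perturbation strictly decreases the objective. Write $h = x - x_0$. Since scaling $(x_0,\phi,y_i)$ by a common positive factor leaves both the hypotheses and the optimization invariant, I may assume $\|x_0\|_2 = 1$. Let $s_i = \sign\langle a_i, x_0\rangle$. Because $y_i = |\langle a_i, x_0\rangle| = s_i\langle a_i,x_0\rangle$, a point $x_0+h$ is feasible precisely when $s_i\langle a_i,h\rangle\le 0$ for every $i$; in particular $x_0$ itself is feasible and lies on the boundary. Restricting to $\|h\|_2=1$ by homogeneity, it suffices to prove that, with the stated probability, $\langle\phi,h\rangle<0$ for every unit vector $h$ in the feasible cone $C = \{h: s_i\langle a_i,h\rangle\le 0,\ i=1,\dots,m\}$. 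Writing $\rho = \langle x_0,h\rangle$ and decomposing $h = \rho x_0 + w$ with $w\perp x_0$ (so $\|w\|_2 = \sqrt{1-\rho^2}$) and $\phi = \langle\phi,x_0\rangle x_0 + \phi_\perp$, the target reads $\langle\phi,h\rangle = \rho\langle\phi,x_0\rangle + \langle\phi_\perp,w\rangle < 0$, where the hypothesis gives $\langle\phi,x_0\rangle > 1-0.6 = 0.4$ and $\|\phi_\perp\|_2\le\|\phi-x_0\|_2 < 0.6$.

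Two probabilistic estimates drive the argument. First, for feasible $h$ the defining inequalities give the exact identity $\langle\sum_i s_i a_i, h\rangle = \sum_i s_i\langle a_i,h\rangle = -\sum_i|\langle a_i,h\rangle|$. The sign-weighted sum concentrates: splitting $a_i = b_i x_0 + g_i$ with $b_i = \langle a_i,x_0\rangle$ and $g_i\perp x_0$, one has $\sum_i s_i a_i = (\sum_i|b_i|)x_0 + \sum_i s_i g_i$, where $\sum_i|b_i| = m\sqrt{2/\pi} + O(\sqrt m)$ by scalar sub-Gaussian concentration, and $\sum_i s_i g_i$ is, by independence of $s_i$ from the symmetric $g_i$, a centered Gaussian vector in $x_0^\perp$ of covariance $m(I-x_0x_0^\top)$, whence $\|\sum_i s_i g_i\|_2\le\delta m$ with probability $1-e^{-\gamma m}$ once $m\ge cn$. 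Thus $\sum_i s_i a_i = m\sqrt{2/\pi}\,x_0 + E$ with $\|E\|_2\le\delta m$. Second, I will establish a uniform lower bound on the empirical $\ell_1$ norm: with probability $1-e^{-\gamma m}$, $\sum_i|\langle a_i,h\rangle|\ge m(\sqrt{2/\pi}-\delta)\|h\|_2$ for all $h$.

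Intersecting the good events, for feasible unit $h$,
\[
m(\sqrt{2/\pi}-\delta)\ \le\ \sum_i|\langle a_i,h\rangle|\ =\ -\langle E,h\rangle - m\sqrt{2/\pi}\,\rho\ \le\ \delta m - m\sqrt{2/\pi}\,\rho .
\]
Dividing by $m\sqrt{2/\pi}$ forces $-\rho\ge 1-2\delta'$ with $\delta' = \delta\sqrt{\pi/2}$, so $h$ is pinned near the $-x_0$ direction: $\rho\le-(1-2\delta')$ and hence $\|w\|_2 = \sqrt{1-\rho^2}\le 2\sqrt{\delta'}$. Plugging into the objective and using $\langle\phi,x_0\rangle>0.4$, $\|\phi_\perp\|_2<0.6$,
\[
\langle\phi,h\rangle\ =\ \rho\langle\phi,x_0\rangle + \langle\phi_\perp,w\rangle\ \le\ -0.4(1-2\delta') + 1.2\sqrt{\delta'},
\]
which tends to $-0.4<0$ as $\delta'\to 0$. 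Choosing the sample-complexity constant $c$ large enough makes $\delta$, hence $\delta'$, small enough that the right-hand side is strictly negative uniformly over $C$; this yields uniqueness. Summing the failure probabilities of the $O(1)$ good events gives the claimed $1-6e^{-\gamma m}$.

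The main obstacle is the uniform empirical-$\ell_1$ lower bound, since the naive net-plus-Lipschitz approach controls the Lipschitz constant of $h\mapsto\frac1m\sum_i|\langle a_i,h\rangle|$ by $\frac1m\sum_i\|a_i\|_2\approx\sqrt n$, which loses a $\log n$ factor and only yields $m\gtrsim n\log n$. The fix is to write this map as $\frac1m\|Ah\|_1$ and bound its Lipschitz constant through the operator norm: on the event $\|A\|_{\mathrm{op}}\le 3\sqrt m$ (which holds with probability $1-e^{-\gamma m}$ for $m\ge n$), the inequality $\|Ah-Ah'\|_1\le\sqrt m\,\|A(h-h')\|_2$ makes the map $O(1)$-Lipschitz, so a constant-scale $\epsilon$-net of the sphere (cardinality $e^{O(n)}$) combined with the pointwise sub-Gaussian bound $\PP(\frac1m\sum_i|\langle a_i,h_0\rangle|\le\sqrt{2/\pi}-\delta/2)\le e^{-cm\delta^2}$ closes the union bound at the optimal $m\ge cn$. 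The remaining estimates are routine scalar and Gaussian-vector concentration.
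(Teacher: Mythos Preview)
Your argument is correct, with one minor phrasing slip: feasibility of $x_0+h$ is not \emph{precisely} $s_i\langle a_i,h\rangle\le 0$ (there is also the constraint $|\langle a_i,h\rangle|\le 2|\langle a_i,x_0\rangle|$), but it \emph{implies} it, and since you then work over the larger cone $C$ the logic is unaffected.

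Your route differs genuinely from the paper's. The paper combines two uniform estimates: the sample covariance bound $\frac1m\sum a_ia_i^\intercal\approx I$ and a two-vector $\ell_1$ lower bound $\frac1m\sum|\langle a_i,x_0\rangle\langle a_i,h\rangle|\ge 0.63\|x_0\|\|h\|$ (proved via a net over pairs $(x_0,x_1)$ and the computation $\mathbb E|\langle a,x_0\rangle\langle a,x_1\rangle|\ge 2/\pi$). From the identity $-\frac1m\sum|\langle a_i,h\rangle\langle a_i,x_0\rangle|=\langle\frac1m\sum a_ia_i^\intercal,hx_0^\intercal\rangle$ for $h\in C$ it obtains $\langle h,x_0\rangle\le -0.6$ and finishes via the simple split $\langle\phi,h\rangle=\langle\phi-x_0,h\rangle+\langle x_0,h\rangle$. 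You instead exploit the identity $\sum|\langle a_i,h\rangle|=-\langle\sum s_ia_i,h\rangle$ on $C$, concentrate the \emph{fixed} vector $\sum s_ia_i$ around $m\sqrt{2/\pi}\,x_0$ (a clean one-shot Gaussian computation, since the signs are determined by $x_0$), and pair it with a \emph{one-vector} uniform $\ell_1$ lower bound on $\frac1m\|Ah\|_1$. This buys you simpler concentration ingredients (no two-variable net) and a sharper conclusion $\rho\le -(1-2\delta')$ rather than $\rho\le -0.6$. Conversely, the paper's final step is cleaner than your orthogonal decomposition of $\phi$; indeed, since you already have $\rho\le -(1-2\delta')<-0.6$, you could conclude directly via $\langle\phi,h\rangle\le\|\phi-x_0\|_2+\rho<0.6-(1-2\delta')<0$ and dispense with the $\phi_\perp,w$ bookkeeping.
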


A satisfactory anchor vector can be efficiently computed with high probability by several methods.  For concreteness, consider the truncated spectral initializer in \cite{twf}.  With $\lambda = \sqrt{\frac{1}{m} \sum_{i=1}^m y_i^2}$,
 let 
 \begin{align}
 \phi = \sqrt{\frac{mn}{\sum_{i=1}^m \|a_i\|_2^2}} \lambda z, \text{ where $z$ is the leading eigenvector of }
\frac{1}{m}\sum_{i=1}^m y_i^2 a_i a_i^\intercal 1_{\{|y_i| < 3 \lambda \}}. \label{eq:initializer}
\end{align}
By Proposition 3 from \cite{twf}, for a fixed $x_0$, this truncated spectral initializer satisfies $\min (\|\phi - x_0\|, \|\phi + x_0\|) \leq 0.6 \|x_0\|_2$ with probability at least $1 - e^{-\gamma m}$, provided that $m \geq c_0 n$.   Note that in the case where $\|\phi + x_0\| \leq 0.6 \|x_0\|_2$, then the output of PhaseMax will be $-x_0$ with high probability, which is exact up to the inherent global phase ambiguity.  Alternatively, one could use the leading eigenvector of $\frac{1}{m} \sum_{i=1}^m y_i^2 a_i a_i^\intercal$ as the anchor vector, as done in the initialization steps of AltMinPhase \cite{altminphase}  and Wirtinger Flow \cite{wirtinger}.  In this case $m = O (n \log n)$ measurements are necessary to obtain an accurate anchor vector \cite{twf}.    The initialization from the Truncated Amplitude Flow \cite{yoninaTAF} could also be used under $m = O(n)$.

The entire proof of Theorem \ref{main-theorem} (with the stronger assumption $\|\phi - x_0\| < 0.4 \|x_0\|_2$) can fit on half a page by appealing to standard concentration estimates on the singular values of Gaussian matrices and to an $\ell_1$-isometry bound proved in Lemma 3.2 from PhaseLift \cite{CSV2013}.  For the sake of exposition, we include a direct proof of this isometry bound with a superior constant.

\section{Proof}
Throughout the proof, the values of constants $c$ and $\gamma$ may change line to line, but they are bounded from above and below by fixed positive numbers.  The proof uses a technical lemma that bounds the singular values of $\frac{1}{m} \sum_{i=1}^m a_i a_i^\intercal$ and a technical lemma that bounds $\frac{1}{m} \sum_{i=1}^m |\langle a_i, x_0\rangle| |\langle a_i, x_1\rangle|$ from below with high probability.

\begin{proof}[Proof of Theorem \ref{main-theorem}]
Without loss of generality, take $\|x_0\|_2 = 1$.  To prove $x_0$ is the unique maximizer of \eqref{PhaseMax}, it suffices to show that for any nonzero $h \in \R^n$,
\[
\langle a_i, h \rangle \langle a_i, x_0 \rangle \leq 0 \quad \text{for} \quad i = 1,2\ldots m \quad \implies \quad \langle \phi, h \rangle < 0
\]
Suppose $\langle a_i, h \rangle \langle a_i, x_0 \rangle \leq 0$ for all $i$.  Then,
\[
-\frac{1}{m}\sum_{i=1}^m |\langle a_i, h \rangle \langle a_i, x_0 \rangle | = \frac{1}{m}\sum_{i=1}^m \langle a_i, h \rangle \langle a_i, x_0 \rangle  = \left \langle \frac{1}{m} \sum_{i=1}^m a_i a_i^\intercal , hx_0^\intercal \right \rangle
\]
By Lemma \ref{lem:isometry}, if $m \geq \eps^{-2} n$, then on an event of probability at least $1-2e^{-\gamma \eps^2 m}$,
\[
\left \langle \frac{1}{m} \sum_{i=1}^m a_i a_i^\intercal , hx_0^\intercal \right \rangle \geq \langle h, x_0 \rangle - \epsilon\|h\|_2\|x_0\|_2 \text{ for all $h\in \R^n$}.
\]
By  Lemma \ref{lemma:lower-bound}\footnote{The same result holds with the constant $0.45$  by applying Lemma 3.2 from PhaseLift \cite{CSV2013} to $h x_0^\intercal + x_0 h^\intercal$.}, if $m \geq c_0 n$, then on an event of probability at least $1 - 4 e^{-\gamma m}$,
\[
-\frac{1}{m}\sum_{i=1}^m |\langle a_i, h \rangle \langle a_i, x_0 \rangle | \leq -0.63 \|h\|_2 \|x_0\|_2 \text{ for all $h\in \R^n$}.
\]
Thus, taking $\eps = 0.03$,
\[
\langle h, x_0 \rangle \leq (-0.63+\eps) \|h\|_2 \|x_0\|_2 = -0.6 \|h\|_2\|x_0\|_2.
\]
Finally, for nonzero $h$, we use the assumption that $\| \phi - x_0 \|_2 < 0.6 \|x_0\|_2$ to conclude
\[
\langle \phi, h\rangle = \langle \phi - x_0, h\rangle + \langle x_0, h \rangle \leq \| \phi - x_0\|_2 \|h\|_2 - 0.6 \|h\|_2 <  0
\]
on an event of probability at least $1 - 6 e^{-\gamma m}$. 
\end{proof}

We now prove the technical lemmas.  
\begin{lemma}\label{lem:isometry}
Fix $\eps < 1$.  Let $a_i \in \mathbb{R}^n, i =1,2\ldots m$ be i.i.d gaussian.  There exists a universal constant $\gamma>0$ such that $\left\| \frac{1}{m} \sum_{i=1}^m a_i a_i^* - I_{n\times n} \right\| \leq \epsilon $ with probability at least $1-2 e^{-\gamma \eps^2 m}$, provided $m \geq \eps^{-2} n$.
\end{lemma}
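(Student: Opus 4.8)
The plan is to reduce the spectral bound to a uniform law-of-large-numbers statement on the sphere and prove it by combining a pointwise tail estimate with a covering argument; this keeps everything elementary, using only a Chernoff bound for chi-squared variables and a standard net. Collect the $a_i^\intercal$ as the rows of $A\in\R^{m\times n}$, so that $\frac 1m\sum_i a_ia_i^\intercal = \frac 1m A^\intercal A$ and the claim $\norm{\frac 1m A^\intercal A - I_{n\times n}}\leq\epsilon$ is equivalent to all singular values of $\frac{1}{\sqrt m}A$ lying in $[\sqrt{1-\epsilon},\sqrt{1+\epsilon}]$. Writing $M = \frac 1m A^\intercal A - I_{n\times n}$, which is symmetric, we have $\norm{M} = \sup_{u\in\mathbb S^{n-1}}|u^\intercal M u|$ with $u^\intercal M u = \frac 1m\sum_{i=1}^m\langle a_i,u\rangle^2 - 1$, so it suffices to control the deviation of these averages from their mean $1$, uniformly over $u$.

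First I would fix a direction $u\in\mathbb S^{n-1}$. Since $a_i\sim\mathcal N(0,I_{n\times n})$, the scalars $\langle a_i,u\rangle$ are i.i.d.\ standard normal and $Z_i := \langle a_i,u\rangle^2$ are i.i.d.\ $\chi^2_1$ variables with mean $1$. Using the moment generating function $\expect e^{\lambda(Z_i - 1)} = (1-2\lambda)^{-1/2}e^{-\lambda}$ (valid for $\lambda<1/2$) in a Chernoff bound and optimizing over $\lambda$ gives a two-sided sub-exponential tail: there is a universal $c>0$ such that, for every $0<t<1$,
\[
\PP\!\left(\left|\tfrac 1m\sum_{i=1}^m\langle a_i,u\rangle^2 - 1\right|\geq t\right)\leq 2e^{-cmt^2}.
\]

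Next I would discretize the sphere. Fix a net parameter $\delta\in(0,1/2)$, say $\delta=\tfrac14$, and let $\mathcal N\subset\mathbb S^{n-1}$ be a $\delta$-net with $|\mathcal N|\leq(1+2/\delta)^n$. The standard reduction for symmetric matrices gives $\norm{M}\leq(1-2\delta)^{-1}\max_{u\in\mathcal N}|u^\intercal M u|$, so applying the pointwise bound at $t=(1-2\delta)\epsilon$ and taking a union bound over $\mathcal N$ yields
\[
\PP\big(\norm{M}\geq\epsilon\big)\leq 2\exp\!\big(n\log(1+2/\delta) - c\,(1-2\delta)^2\,\epsilon^2 m\big).
\]

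The main obstacle, and the only place the sample count matters, is the competition in this exponent between the covering entropy $n\log(1+2/\delta)$ and the concentration gain $c(1-2\delta)^2\epsilon^2 m$. Because the $\chi^2_1$ variables are merely sub-exponential, the favorable Gaussian-type tail $e^{-cmt^2}$ holds only in the regime $t<1$, which is exactly why the argument needs $m$ proportional to $\epsilon^{-2}n$: once $m\geq C\epsilon^{-2}n$ with the universal constant $C$ chosen (depending on $\delta$ and $c$) so that $\log(1+2/\delta)/C < c(1-2\delta)^2$, the entropy term is absorbed and the exponent is at most $-\gamma\epsilon^2 m$ for a universal $\gamma>0$, giving the stated probability $1-2e^{-\gamma\epsilon^2 m}$. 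The constant $C$ is folded into the sample-complexity requirement and only rescales $\gamma$; an alternative, equally standard route avoids the net entirely by bounding $\expect\sigma_{\max}(\tfrac{1}{\sqrt m}A)$ and $\expect\sigma_{\min}(\tfrac{1}{\sqrt m}A)$ via Gordon's comparison inequality and then adding Gaussian Lipschitz concentration of the extreme singular values.
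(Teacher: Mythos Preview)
Your proposal is correct. The paper's own proof is a one-line citation of Corollary~5.35 in Vershynin, which is precisely the ``alternative route'' you mention at the end: Gordon's comparison inequality for $\expect\sigma_{\max}$ and $\expect\sigma_{\min}$ followed by Gaussian Lipschitz concentration of the extreme singular values. Your primary argument instead spells out the other standard path---a pointwise $\chi^2$ Bernstein tail plus an $\epsilon$-net on the sphere and the symmetric-matrix reduction $\|M\|\leq(1-2\delta)^{-1}\max_{u\in\mathcal N}|u^\intercal Mu|$. Both routes are textbook and yield the same conclusion up to universal constants; the net argument is more self-contained (no comparison inequality needed), while the Gordon route gives cleaner constants in the singular-value bounds and is what the paper implicitly invokes. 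Your remark that the extra constant $C$ in front of $\eps^{-2}n$ can be absorbed into $\gamma$ is in the spirit of the paper's stated convention that constants may change from line to line.
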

\begin{proof}
This claim follows from standard concentration estimates for Gaussian matrices (e.g. Corollary 5.35 in \cite{V2012}).
\end{proof}


\begin{lemma}  \label{lemma:lower-bound}
There exist constants $c_0, \gamma, \delta_0$ such that for any $0<\delta < \delta_0$, if $m \geq c_0 (\delta^{-2} \log \delta^{-1}) n$, then with probability at least $1-4e^{-\gamma m \delta^2}$
\begin{align}
\frac{1}{m} \sum_{i=1}^m | \langle a_i , x_0 \rangle \langle a_i,  x_1 \rangle| \geq \frac{2}{\pi} (1-\delta) \| x_0\|_2 \|x_1\|_2 \quad \text{ for all } x_0, x_1 \in \R^n \label{eq:lower-bound}
\end{align}
\end{lemma}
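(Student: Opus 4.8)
The plan is to reduce to unit vectors, identify the pointwise expectation, establish sub-exponential concentration at each fixed pair, and then promote this to a uniform bound by a covering argument whose continuity estimate is controlled by Lemma~\ref{lem:isometry}. First I would use the homogeneity of degree one in both $x_0$ and $x_1$ on each side of \eqref{eq:lower-bound} to assume $\|x_0\|_2 = \|x_1\|_2 = 1$, so the claim becomes a uniform lower bound over $(x_0,x_1)\in S^{n-1}\times S^{n-1}$. For fixed unit $u,v$, the pair $(\langle a, u\rangle, \langle a, v\rangle)$ is a centered bivariate Gaussian with correlation $\rho = \langle u,v\rangle$, and a direct computation gives
\begin{equation*}
\expect|\langle a,u\rangle\langle a,v\rangle| = \frac{2}{\pi}\big(\sqrt{1-\rho^2} + \rho\arcsin\rho\big).
\end{equation*}
The bracketed factor has derivative $\arcsin\rho$ in $\rho$, hence is minimized at $\rho=0$ with value $1$; thus $\expect|\langle a,u\rangle\langle a,v\rangle|\ge \tfrac{2}{\pi}$ for every unit $u,v$, with equality exactly at orthogonality. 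This identifies $\tfrac{2}{\pi}$ as the worst-case mean and explains the target constant.

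Next, for a fixed pair I would invoke sub-exponential concentration. Since $|\langle a,u\rangle\langle a,v\rangle|$ is a product of two standard Gaussian marginals it has bounded $\psi_1$ norm, so Bernstein's inequality yields
\begin{equation*}
\PP\!\Big(\tfrac{1}{m}\sum_{i=1}^m |\langle a_i,u\rangle\langle a_i,v\rangle| < \tfrac{2}{\pi}(1-\delta)\Big) \le 2e^{-c m\delta^2}
\end{equation*}
for all small $\delta$, using that the mean is at least $\tfrac{2}{\pi}$. I would then fix an $\eta$-net $\mathcal{N}$ of $S^{n-1}$ with $|\mathcal{N}|\le (3/\eta)^n$ and union-bound this tail over the $|\mathcal{N}|^2$ net pairs.

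The hard part will be the continuity step that transfers the bound from the net to all of $S^{n-1}\times S^{n-1}$ without paying a $\log n$ factor. A crude Lipschitz estimate bounds the per-sample difference by $\|a_i\|_2^2(\|u-u'\|_2 + \|v-v'\|_2)$, which forces $\eta\sim\delta/n$ and hence a fatal $\log n$ in the exponent. Instead I would split $\langle a_i,u\rangle\langle a_i,v\rangle - \langle a_i,u'\rangle\langle a_i,v'\rangle = \langle a_i,u\rangle\langle a_i,v-v'\rangle + \langle a_i,u-u'\rangle\langle a_i,v'\rangle$, apply Cauchy--Schwarz to each averaged term, and recognize $\tfrac{1}{m}\sum_i\langle a_i,w\rangle^2 = w^\intercal(\tfrac{1}{m}\sum_i a_ia_i^\intercal)w$. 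On the event of Lemma~\ref{lem:isometry} with $\eps$ a fixed constant, this is at most $2\|w\|_2^2$, so the empirical averages of $|\langle a,u\rangle\langle a,v\rangle|$ are Lipschitz in $(u,v)$ with a \emph{universal} constant: the off-net deviation is $O(\eta)$ rather than $O(n\eta)$.

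Finally, choosing $\eta$ a small multiple of $\delta$ makes this deviation smaller than $\tfrac{2}{\pi}\delta$ while keeping $\log|\mathcal{N}|^2 = O(n\log(1/\delta))$. Balancing $n\log(1/\delta)$ against $m\delta^2$ in the union bound gives the stated sample complexity $m\ge c_0(\delta^{-2}\log\delta^{-1})n$, and combining the net event with the event of Lemma~\ref{lem:isometry} (both of probability $1 - O(e^{-\gamma m\delta^2})$, since $\delta<1$ lets us absorb the $e^{-\gamma m}$ term) yields the claimed $1-4e^{-\gamma m\delta^2}$. The essential point throughout is that the operator-norm control from Lemma~\ref{lem:isometry} is exactly what converts the naive $n$-dependent modulus of continuity into a dimension-free one, which is what keeps the sample complexity at the optimal $O(n)$ scale in $n$.
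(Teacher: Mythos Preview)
Your proposal is correct and follows essentially the same approach as the paper: the same expectation computation (the paper's Lemma~\ref{lemma:expectation}), Bernstein concentration for fixed pairs, and an $\eps$-net over $S^{n-1}\times S^{n-1}$ with a dimension-free continuity estimate driven by concentration of the sample covariance. The only cosmetic difference is in the continuity step: the paper routes it through an $\ell_1$-isometry of the operator $\A(X)=(a_i^\intercal X a_i)_{i=1}^m$ in the nuclear norm (quoting Lemma~3.1 of \cite{CSV2013}), whereas you split the rank-one difference and apply Cauchy--Schwarz with Lemma~\ref{lem:isometry} directly---but both reduce to the same singular-value concentration and yield the same $O(\eps)$ off-net error.
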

\begin{proof} 
Without loss of generality, we take $\|x_0\| = \|x_1\| = 1$. 
Define $\A: \R^{n \times n} \to \R^m, X \mapsto (a_i^\intercal X a_i)_{i=1}^m$.  Observe
 that $\frac{1}{m} \sum_{i=1}^m | \langle a_i, x_0\rangle \langle a_i, x_1\rangle| = \frac{1}{m}\| \mathcal{A}(x_0 x_1^\intercal) \|_1$.  
 
First, we show that for any fixed $x_0, x_1$ of unit length, 
$\frac{1}{m} \| \A(x_0 x_1^\intercal)\|_1 \geq \frac{2}{\pi} - \delta_1$
with probability at least $1 - 2 e^{-\gamma m \delta_1^2}$.  To show this, observe that $\frac{1}{m}\|\A(x_0 x_1^\intercal)\|_1 = \frac{1}{m} \sum_{i=1}^m \xi_i$, where $\xi_i =| \langle a_i, x_0\rangle \langle a_i, x_1\rangle| $ is a subexponential random variable.  Let $K$ be the subexponential norm of $\xi_i$.  By Lemma~\ref{lemma:expectation}, $\mathbb{E} \xi_i \geq \frac{2}{\pi}$.     Thus, by the Bernstein-type inequality (Corollary 5.17 in \cite{V2012}) 
$$
\PP\Bigl( \frac{1}{m} \| \A(x_0 x_1^\intercal)\|_1 \geq \frac{2}{\pi} - \delta_1 \Bigr) \geq 1- 2 \exp \left[- \gamma m \min \Bigl(\frac{\delta_1^2}{K^2}, \frac{\delta_1}{K} \Bigr) \right]
$$
Later, we will take $\delta_1 < K$. 

Second, we show that the bound \eqref{eq:lower-bound} holds simultaneously for all $x_0, x_1$.  Let $\mathcal{N}_\eps$ be an $\eps$-net of $S^{n-1}$.  By Lemma 5.2 in \cite{V2012}, we may take $| \mathcal{N}_\eps| \leq (1 + \frac{2}{\eps})^n$.   For any $(x_0, x_1)$, there exists a $(\xtilde_0, \xtilde_1) \in \mathcal{N}_\eps \times \mathcal{N}_\eps$ such that 
$
\|x_0 x_1^\intercal - \xtilde_0 \xtilde_1^\intercal\|_1  
\leq 3 \eps.
$
Because $\A(x_0 x_1^\intercal) = \A(\xtilde_0 \xtilde_1^\intercal) + \A(x_0 x_1^\intercal - \xtilde_0 \xtilde_1^\intercal)$, we have
$$
\frac{1}{m}\|\A(x_0 x_1^\intercal)\|_1 \geq \frac{1}{m} \| \A(\xtilde_0 \xtilde_1^\intercal) \|_1 - \frac{1}{m}\|\A(x_0 x_1^\intercal - \xtilde_0 \xtilde_1^\intercal)\|_1.
$$
On the intersection of the events 
\begin{align*}
E_1 &= \{m^{-1} \|\A(x_0 x_1^\intercal) \|_1 \geq \frac{2}{\pi} - \delta_1, \text{ for all } (x_0, x_1) \in \mathcal{N}_\eps \times \mathcal{N}_\eps\}\\
E_2 &= \{m^{-1}\| \A(X) \|_1 \leq (1 + \delta_2) \|X\|_1, \text{ for all } X \in \R^{n \times n} \}
\end{align*}
we have
\begin{align*}
\frac{1}{m}\|\A(x_0 x_1^\intercal)\|_1  \geq \Bigl(\frac{2}{\pi} - \delta_1 \Bigr) - (1 + \delta_2)  3 \eps
\end{align*}
Choose $\delta_1 = \delta/\pi, \delta_2 = 1/2, \eps = 2\delta/(9\pi)$.  Further,  choose $\delta_0$ such that $\delta_1 < K$.    Thus,  on $E_1 \cap E_2$, $\frac{1}{m}\|\A(x_0 x_1^\intercal)\|_1  \geq \frac{2}{\pi}(1-\delta)$. It remains to estimate the probability of $E_1\cap E_2$.   We have $$\PP(E_1) \geq 1 - | \mathcal{N}_\eps \times \mathcal{N}_\eps| \cdot 2 e^{-\gamma m \delta_1^2} \geq 1 - 2 \Bigl( 1 + \frac{2}{\eps}\Bigr)^{2n} e^{- \gamma m \delta_1^2}.$$  If $m \geq c_0 n \delta
_1^{-2} \log \delta_1^{-1}$ for a sufficiently large $c_0$,  then $\PP(E_1) \geq 1 - 2 e ^{-\gamma m \delta_1^2}$.  Using standard concentration estimates on the singular values of  Gaussian matrices (Corollary 5.35 in \cite{V2012}), it can be shown that $m \geq 16 \delta_2^{-2} n$ implies $\PP(E_2) \geq 1 - 2 e^{-\gamma m \delta_2^2}$.   See Lemma 3.1 in \cite{CSV2013} for the elementary details.  Thus, $\PP(E_1 \cap E_2) \geq 1 - 4 e^{-\gamma m \delta^2}$ for some constant $\gamma$. 

\end{proof}

\begin{lemma}  \label{lemma:expectation}
Fix $x_0, x_1 \in \mathbb{R}^n$ and let $\theta$ be the angle between $x_0$ and $x_1$.  Let $a \sim \mathcal{N}(0, I_{n \times n})$.  $$\mathbb{E}|\langle a_i, x_0\rangle \langle a_i, x_1 \rangle | = \frac{2}{\pi} \left(|\sin \theta| + \sin^{-1} (\cos \theta) \cos \theta \right) \|x_0\|_2 \|x_1\|_2 \geq \frac{2}{\pi} \|x_0\| \|x_1\|$$
\end{lemma}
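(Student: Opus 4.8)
The plan is to reduce the computation to a two-dimensional Gaussian integral and then evaluate it in polar coordinates. By homogeneity of both sides in $\|x_0\|_2$ and $\|x_1\|_2$, I may assume $\|x_0\|_2 = \|x_1\|_2 = 1$. The random variables $\langle a, x_0\rangle$ and $\langle a, x_1\rangle$ depend on $a$ only through its orthogonal projection onto $\spann\{x_0, x_1\}$, and by rotational invariance of the standard Gaussian this projection is a two-dimensional standard Gaussian. Choosing an orthonormal basis $e_1, e_2$ of the span with $x_0 = e_1$ and $x_1 = \cos\theta\, e_1 + \sin\theta\, e_2$, I can write $\langle a, x_0\rangle = g_1$ and $\langle a, x_1\rangle = \cos\theta\, g_1 + \sin\theta\, g_2$ with $g_1, g_2$ i.i.d. $\mathcal{N}(0,1)$, so the target reduces to evaluating $\mathbb{E}\,|g_1(\cos\theta\, g_1 + \sin\theta\, g_2)|$.

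Next I would pass to polar coordinates $g_1 = r\cos\varphi$, $g_2 = r\sin\varphi$, under which $g_1 = r\cos\varphi$ and $\cos\theta\,g_1 + \sin\theta\,g_2 = r\cos(\varphi-\theta)$, so the integrand factorizes as $r^2\,|\cos\varphi\,\cos(\varphi-\theta)|$ against the density $\tfrac{1}{2\pi} e^{-r^2/2}\, r\,dr\,d\varphi$. The radial and angular parts then separate: the radial integral $\int_0^\infty r^3 e^{-r^2/2}\,dr = 2$ is an elementary substitution, leaving $\mathbb{E}\,|g_1(\cos\theta\,g_1+\sin\theta\,g_2)| = \tfrac{1}{\pi} \int_0^{2\pi} |\cos\varphi\,\cos(\varphi-\theta)|\,d\varphi$.

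The remaining angular integral is where the real bookkeeping lives. Using the product-to-sum identity $\cos\varphi\cos(\varphi-\theta) = \tfrac12\bigl(\cos\theta + \cos(2\varphi-\theta)\bigr)$ together with the substitution $\psi = 2\varphi - \theta$ (which traverses two full periods), the integral collapses to $\tfrac12\int_0^{2\pi}|\cos\theta + \cos\psi|\,d\psi$. Taking the angle $\theta \in [0,\pi]$ (the convention for the angle between two vectors, so $|\sin\theta| = \sin\theta$), the quantity $\cos\theta + \cos\psi$ is negative exactly on $\psi \in (\pi-\theta,\pi+\theta)$ and nonnegative elsewhere; splitting the integral accordingly and integrating $\cos\psi$ over each piece gives the closed form $\tfrac{2}{\pi}\bigl(\sin\theta + \cos\theta(\tfrac\pi2 - \theta)\bigr)$. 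Finally I would identify $\arcsin(\cos\theta) = \tfrac\pi2 - \theta$ for $\theta \in [0,\pi]$ to match the stated expression. The main obstacle is carrying out this sign-region split correctly and confirming it is valid for all $\theta \in [0,\pi]$, including $\cos\theta < 0$; a short check shows the positive and negative regions are the same in both cases, so the closed form holds uniformly.

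For the lower bound, I would substitute $\beta = \tfrac\pi2 - \theta \in [-\tfrac\pi2, \tfrac\pi2]$, turning the bracketed expression into $g(\beta) = \cos\beta + \beta\sin\beta$. Since $g'(\beta) = \beta\cos\beta$ is nonpositive on $[-\tfrac\pi2, 0]$ and nonnegative on $[0, \tfrac\pi2]$, the function $g$ attains its minimum at $\beta = 0$ with $g(0) = 1$. Hence $g(\beta) \ge 1$, the bracketed factor is at least $1$, and the expectation is at least $\tfrac{2}{\pi}\|x_0\|_2\|x_1\|_2$, which is the claimed bound.
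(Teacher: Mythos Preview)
Your proof is correct and follows essentially the same approach as the paper: normalize, reduce to two dimensions by rotational invariance, pass to polar coordinates, use the product-to-sum identity $2\cos\varphi\cos(\varphi-\theta)=\cos\theta+\cos(2\varphi-\theta)$, and then substitute to isolate a single-variable integral. Your version is in fact more detailed than the paper's: the paper simply writes the final closed form after the substitution without carrying out the sign-region split, and it asserts the inequality $\geq \tfrac{2}{\pi}$ without justification, whereas your calculus argument via $g(\beta)=\cos\beta+\beta\sin\beta$ cleanly establishes the minimum.
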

\begin{proof}
Without loss of generality, take $\|x_0\|=\|x_1\| = 1$.  Further, without loss of generality, take $x_0 = e_1$ and $x_1 = \cos \theta \ e_1 + \sin \theta \ e_2$.  The expected value is 
\begin{align*}
\mathbb{E} |a_1 (a_1 \cos \theta + a_2 \sin \theta) | &= \frac{1}{2 \pi} \int_0^\infty r^3 e^{-r^2/2} dr \int_0^{2\pi} |\cos \phi \cos(\theta - \phi)| d\phi \\
&=\frac{1}{2\pi} \int_0^{2\pi}|\cos \theta + \cos(2 \phi - \theta)| d\phi\\
&=\frac{1}{\pi} \int_0^\pi |\cos \theta + \cos \tilde{\phi}| d\tilde{\phi}\\
&=\frac{2}{\pi}  \left(|\sin \theta| + \sin^{-1} (\cos \theta) \cos \theta \right) \geq \frac{2}{\pi},
\end{align*}
where the second equality is because $2 \cos \phi \cos(\theta - \phi) = \cos \theta + \cos(2 \phi - \theta)$. 
\end{proof}

\section{Acknowledgements}
PH acknowledges funding by the grant NSF DMS-1464525. 
\bibliographystyle{plain}
\bibliography{refs}

\end{document}